\documentclass[reprint, aps, 10pt, pra, nofootinbib, preprintnumbers]{revtex4-2}

\usepackage{graphicx}
\usepackage{amsmath, amssymb, amsthm}
\usepackage{physics}

\newtheorem{lemma}{Lemma}
\newtheorem{proposition}{Proposition}

\usepackage{microtype}
\usepackage{hyperref}
\hypersetup{colorlinks=true}

\newcommand{\F}{\mathbb F}
\newcommand{\rk}{\operatorname{rank}}
\newcommand{\Pf}{\operatorname{Pf}}

\begin{document}

\title{R\'enyi multi-entropy of graph states
and the \texorpdfstring{$n\to1$}{n to 1} limit}

\author{Sriram Akella}
\email{sriram.akella@tifr.res.in}
\affiliation{Department of Theoretical Physics, Tata Institute of Fundamental Research, Homi Bhabha Road, Colaba, Mumbai 400005, India}

\author{Abhijit Gadde}
\email{abhijit@theory.tifr.res.in}
\affiliation{Department of Theoretical Physics, Tata Institute of Fundamental Research, Homi Bhabha Road, Colaba, Mumbai 400005, India}

\begin{abstract}
R\'enyi multi-entropy extends the R\'enyi entropy to pure states of more than two parties. It is defined by a replica contraction at integer replica number $n$, and the multi-entropy is its $n\to1$ limit. In holography this limit is conjectured to compute the area of a minimal soap-film web in the bulk. Understanding multi-entropy for general states, however, is harder. Unlike the bipartite case, the replica contraction is not a moment of a density matrix, and the continuation to $n=1$ has to be found by other means. We study this problem for qubit graph states. We show that at every odd $n$ the contraction equals $2^{-k/2}$, where $k$ is a sum of ranks of matrices over a finite field, one row and one column per qubit. We define a class of graphs that we call collision-controlled, for which the rank sum is a polynomial in $n$. This class includes graphs without even cycles and all complete bipartite graphs. The natural polynomial $n \to 1$ continuation is a finite sum over partitions of the parties. While the R\'enyi multi-entropy is non-increasing under single-qubit Pauli measurements at every odd integer $n>1$, the continuation to $n=1$ is not. A $Z$ measurement on one qubit of an odd ring with $2r + 1 \geq 7$ qubits raises it from $(r+1)\log 2$ to $(2r-1) \log 2$. Thus the continuation is not an entanglement monotone.
\end{abstract}
\preprint{TIFR/TH/26-29}
\maketitle

\section{Introduction} 

The R\'enyi entropies of a reduced density matrix $\rho_A$ are determined by the moments $\Tr\rho_A^n$, and each moment has a simple replica representation: take $n$ copies of the state and contract the indices of $A$ cyclically. The R\'enyi multi-entropy \cite{GKS} extends this construction to a pure state on $q$ parties. The copies are placed on a periodic $(q-1)$-dimensional lattice with $n$ sites in each direction, and each of the first $q-1$ parties is contracted cyclically along its lattice direction. This gives a party-symmetric local-unitary invariant built from $n^{q-1}$ copies. The multi-entropy is the limit $n \to 1$. At $q=2$, it reduces to the usual entanglement entropy. 

For holographic states, the $n \to 1$ limit has a simple geometric description. At leading semiclassical order the R\'enyi multi-entropy is conjectured to be the area of a web of branes anchored on the boundary partition \cite{GKS}, in analogy with the cosmic branes that compute holographic R\'enyi entropies \cite{Lewkowycz:2013nqa, Dong:2016fnf}. As $n\to1$, the branes become light and no longer backreact on the bulk geometry. The limiting web is a minimal soap-film configuration, whose area is conjectured to compute the multi-entropy, in analogy with the Ryu--Takayanagi prescription for entanglement entropy~\cite{Ryu:2006bv, Ryu:2006ef}. The proposal has since been tested in two-dimensional conformal field theory \cite{HarperTakayanagiTsuda} and in random tensor networks \cite{Penington:2022dhr, Akella:2026bci, Hu:2026bhg}, extended to mixed states through a reflected construction \cite{YuanLiZhou}, placed within a larger family of replica contractions defined by permutations \cite{GKSClassification}, and studied under coarse graining \cite{GaddeMonotonicity}.

In holography, the bulk geometry gives a natural $n \to 1$ limit, and the resulting soap-film area suggests that the multi-entropy captures something genuine about multipartite entanglement \cite{Walter:2016lgl, Horodecki:2024bgc}. To identify what, we want to take the $n \to 1$ limit for a general multipartite state. This is, however, much harder than in the bipartite case. For two parties, the replica invariant is the moment of a positive operator, whose spectrum gives the continuation to real $n$. For $q > 2$, no such universal continuation prescription is known~\cite{GKS,GKSClassification}. We ask what the integer replica data alone can determine, and what additional input is required for a general definition of the multi-entropy.

Qubit graph states \cite{Hein:2006uvf, Schlingemann:2000eqc} make the question tractable. They are stabilizer states \cite{Gottesman}, meaning each is the unique joint eigenstate of a maximal set of commuting Pauli operators, and they are labeled by simple graphs; they include the cluster states of measurement-based quantum computation \cite{RaussendorfBriegel, Briegel:2000unx}, and their entanglement properties translate into properties of the graph \cite{HeinEisert, Fattal:2004frh}. Local Clifford gates, which map Pauli operators to Pauli operators under conjugation, and local Pauli measurements take the form of elementary graph operations~\cite{VandenNest}. For these states the replica contraction can be carried out exactly at every odd $n$. It reduces to linear algebra over the finite field $\mathbb{F}_2$, and the whole computation becomes a sum of ranks of matrices with one row and one column per qubit \cite{Akella:2026xza, Czech:2026cxw}.

Our results are as follows. For a class of graphs that we call \emph{collision-controlled}, and for odd $n$, the rank sum $r_G(n)$ that determines the R\'enyi multi-entropy is a polynomial in $n$. The collision-controlled class includes every graph without an even cycle, and all complete bipartite graphs. The $n \to 1$ polynomial continuation is a finite sum over partitions of the party set $\{1, \dots, q\}$. Some results of the analytic continuation are shown in Table~\ref{tab:results}. Our second result is that the R\'enyi multi-entropy is non-increasing under single-qubit Pauli measurements at every odd integer $n>1$, whereas its polynomial continuation to $n=1$ need not be. This implies, in particular, that the continuation to $n=1$ is not an entanglement monotone \cite{Vedral:1997qn, Vidal, Plenio:2007zz, Horodecki:2009zz}. For example, a $Z$ measurement on one qubit of a seven-vertex ring lowers the R\'enyi multi-entropy at every odd integer $n>1$, but raises the value of the $n=1$ continuation (see Fig.~\ref{fig:C7_to_P6}). Both the ring and the resulting path are collision-controlled, so the failure of monotonicity occurs within the class where the entire odd-$n$ sequence has a unique polynomial continuation. 

\begin{figure}
    \centering
    \includegraphics[width=\linewidth]{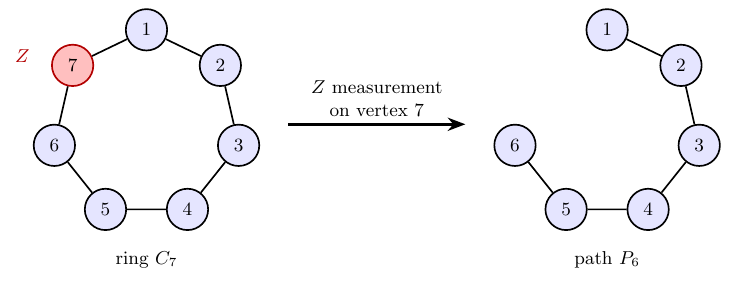}
    \caption{A $Z$ measurement on one qubit of the seven-vertex ring $C_7$ results in the six-vertex path $P_6$. The measurement lowers the R\'enyi multi-entropy at every odd integer $n>1$, but raises the $n \to 1$ multi-entropy. Both graphs are collision-controlled, so the odd-$n$ R\'enyi multi-entropy has a unique polynomial continuation.}
    \label{fig:C7_to_P6}
\end{figure}

\begin{table}
\caption{Values $S_1$ obtained by polynomial continuation to $n=1$, with one qubit per party. The derivations are given in Appendix~\ref{app:collision-examples}.}
\label{tab:results}
\begin{ruledtabular}
\begin{tabular}{lc}
graph & $S_1/\log2$\\
\hline
forest with $m$ edges & $m$\\
odd cycle on $2r+1$ vertices & $r+1$\\
complete bipartite, class sizes $m_1,m_2$ & $m_1+m_2-1$
\end{tabular}
\end{ruledtabular}
\end{table}

In section \ref{sec:ranksum}, show how the R\'enyi multi-entropy reduces to a rank problem over the finite field $\mathbb{F}_2$. In section \ref{sec:collision-controlled} we define collision-controlled graphs, and show how the rank problem simplifies to give a formula for the R\'enyi multi-entropy for odd $n$. Then we study the behavior of R\'enyi multi-entropy for odd $n$ and also in the $n\to1$ limit. In section \ref{sec:monotonicity}, we show the natural $n \to 1$ limit is not monotonic under local measurements. The rank proofs and the calculations for forests, odd cycles, and complete bipartite graphs are collected in the two appendices \ref{app:rank-reduction} and \ref{app:collision-examples}.

\section{R\'enyi multi-entropy as a rank sum}
\label{sec:ranksum}

Let $|\psi\rangle\in\mathcal H_1\otimes\cdots\otimes\mathcal H_q$ be a normalized pure state on $q\geq2$ parties. For a positive integer $n$ we take $n^{q-1}$ copies of $|\psi\rangle \langle \psi |$ and label them by points of the periodic lattice $\Lambda_n=\mathbb Z_n^{\,q-1}$. Let $e_a$ be the unit vector in the $a^{\text{th}}$  lattice direction for $a=1,\ldots,q-1$, and set $e_q=0$. On the copies of party $a$, let $\sigma_a$ be the permutation that sends the copy at $s$ to the copy at $s+e_a$; in particular $\sigma_q$ is the identity. The replica partition function is~\cite{GKS}
\begin{equation}
Z_n^{(q)}(\psi)=\langle\psi|^{\otimes n^{q-1}}\bigl(\sigma_1\otimes\cdots\otimes\sigma_q\bigr)|\psi\rangle^{\otimes n^{q-1}}.
\label{eq:replica-invariant}
\end{equation}
Figure~\ref{fig:replica-lattice} shows the case $q=n=3$. Singling out party $q$ as the one that is not shifted removes a common translation of all copies; the result is nevertheless a local unitary invariant that is symmetric under permutations of all $q$ parties. For $q=2$ the lattice is $\mathbb Z_n$, $\sigma_1$ is an $n$-cycle, and $Z_n^{(2)}=\Tr\rho_1^n$, where $\rho_1$ is the reduced density matrix of the first party.

For $n>1$, we define the R\'enyi multi-entropy as
\begin{equation}
S_n^{(q)}(\psi)=\frac{1}{1-n}\log Z_n^{(q)}(\psi).
\label{eq:renyi-multientropy}
\end{equation}
Ref. \cite{GKS} includes an extra factor $n^{-(q-2)}$. It tends to one as $n\to1$, so the two conventions have the same multi-entropy whenever the limit exists, and we drop it because it clutters the rank formulas below.

\begin{figure}
\centering
\includegraphics[width=0.55\linewidth]{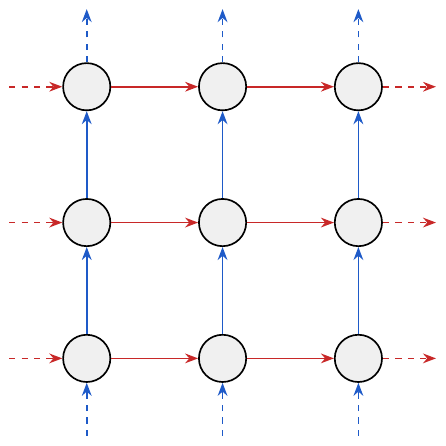}
\caption{Replica lattice for $q=3$ and $n=3$. Each site carries one copy of the state. The permutations of the first and second parties translate the copy label along the horizontal and vertical directions, while the third party is not shifted. Dashed arrows indicate the periodic identifications.}
\label{fig:replica-lattice}
\end{figure}

\textit{Qubit graph states.} Let $G=(V,E)$ be a simple graph with $N$ vertices and binary adjacency matrix $\gamma$. Each vertex carries a qubit, and a map $\alpha:V\to\{1,\ldots,q\}$ assigns the qubit at $u$ to party $\alpha(u)$. Let $|+\rangle\equiv(|0\rangle+|1\rangle)/\sqrt2$. The graph state $|G\rangle$ is obtained from $|+\rangle^{\otimes N}$ by applying a controlled-$Z$ gate along every edge, i.e., 
\begin{equation}\label{eq:ketG}
    \ket{G} = \left(\prod_{\{u, v\} \in E} CZ_{uv}\right) \ket{+}^{\otimes N}. 
\end{equation}
The controlled-$Z$ gates commute with each other, so the state does not depend on the order in which these gates are applied.
Any stabilizer state is local-unitary equivalent to a graph state \cite{Schlingemann:2001zyo, VandenNest, Grassl:2007hcz}. Therefore, our results hold more generally for stabilizer states. 

Given the graph $G$, let us define the quadratic form
\begin{equation}
    \Phi_{G}(\mathbf{x}) = \sum_{u < v} \gamma_{uv} x_u x_v,
\end{equation}
where $\mathbf{x}$ is a binary vector in $\mathbb{F}_2^N$. Substituting the graph state from Eq.~\eqref{eq:ketG} in the expression for the replica partition function in Eq.~\eqref{eq:replica-invariant}, we get \cite{Akella:2026xza, Czech:2026cxw} 
\begin{equation}\label{eq:gauss-sum}
    Z_{n}^{(q)} = \langle + \dots + | \mathcal{G}\rangle = \frac{1}{2^d} \sum_{\mathbf{x} \in \mathbb{F}_2^d} (-1)^{\Phi_\mathcal{G}(\mathbf{x})},
\end{equation}
where $\mathcal{G}$ is the ``big graph'' state obtained from $G$ over $d = N n^{q-1}$ vertices. These vertices are labeled by the tuple $(a, u)$ where $a$ is a replica label running from $1$ to $n^{q-1}$ and $u$ is the vertex label running from $1$ to $N$. Introduce the shorthand $\sigma_u$ for the permutation $\sigma_{\alpha(u)}$ corresponding to the party $\alpha(u)$ that the vertex $u$ belongs to. The adjacency matrix for the big graph $\mathcal{G}$ is
\begin{equation}
    \Gamma_{(a, u), (b, v)} = \gamma_{uv}\left(\delta_{ab} + \delta_{\sigma_u(a) \sigma_v(b)}\right) \mod 2. 
\end{equation}
See Fig.~\ref{fig:big_q3_n2} for an example of the big graph $\mathcal{G}$ constructed from the triangle graph and the $n = 2$, $q = 3$ R\'enyi multi-entropy.  

\begin{figure}
    \centering
    \includegraphics[width=0.75\linewidth]{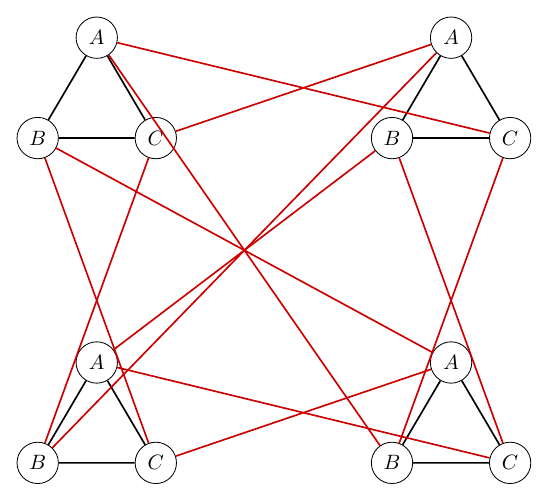}
    \caption{The big graph $\mathcal{G}$ for $q= 3$ and $n= 2$, built from the triangle with its vertices $A, B, C$ in three different parties. Each of the $n^{q-1} = 4$  copies carries a copy of $G$ (black edges), from the term $\delta_{ab}$ in $\Gamma$. The red edges come from the term $\delta_{\sigma_u(a) \sigma_v(b)}$. They join vertices of different copies whose bra indices are contracted by the permutations of the R\'enyi multi-entropy.}
    \label{fig:big_q3_n2}
\end{figure}

\textit{The rank formula.} The Gauss sum in Eq.~\eqref{eq:gauss-sum} vanishes if $\Phi_{\mathcal G}$ is not identically zero on $\ker \Gamma$, and can pick up a negative sign depending on the Arf invariant \cite{Czech:2026cxw, Arf}.
Let \(r_G(n)\) be the rank over $\mathbb F_2$ of the adjacency matrix $\Gamma$ of the big graph $\mathcal{G}$. For odd \(n\), we show in Appendix~\ref{app:rank-reduction} that the Gauss sum neither vanishes nor acquires a negative sign, giving
\begin{equation}
Z_n^{(q)}(G)=2^{-r_G(n)/2},\quad
S_n^{(q)}(G)=\frac{\log2}{2(n-1)}\,r_G(n).
\label{eq:rank-partition-function}
\end{equation}
For even $n$, we do not have a proof, although numerical checks support the same result.

The idea is that the permutations defining the R\'enyi multi-entropy can simultaneously be diagonalized in the \emph{replica momentum} basis, and the eigenvalues are $n^{\text{th}}$ roots of unity. For odd $n$, these roots can be embedded in a field extension $\mathbb{F}_{2^m}$ of $\mathbb{F}_2$, where $m$ is such that $n$ divides $2^m - 1$. For even $n$, this is not possible, and we stick to odd $n$ for the rest of the paper. 

A simple check: for two parties, we have $r_G(n)=2(n-1)\rk_{\F_2}\gamma_{12}$, where $\gamma_{12}$ is the block of $\gamma$ joining the two parties, and Eq.~\eqref{eq:rank-partition-function} gives $S_n^{(2)}=\rk_{\F_2}(\gamma_{12})\log2$ for every odd $n$, the familiar flat spectrum of a bipartite graph state. 

The $n \to 1$ limit, therefore, depends on the rank $r_G(n)$ of the adjacency matrix $\Gamma$. This is a large matrix, but it becomes simple in the replica momentum basis. A joint eigenspace of the R\'enyi multi-entropy translations is labeled by assigning one root of unity to each party: $t = (t_1, \dots, t_{q-1})$, with $t_q = 1$ for the reference party. These roots are to be thought of as elements of the finite field $\mathbb{F}_{2^m}$. On this eigenspace, the adjacency matrix $\Gamma$ becomes (up to a diagonal rescaling) the $N \times N$ matrix 
\begin{equation}\label{eq:Upsilon-t}
    \left(\Upsilon(t)\right)_{uv} = \gamma_{uv}(t_{\alpha(u)} - t_{\alpha(v)}).  
\end{equation}
It is the adjacency matrix of $G$ with edge $\{u,v\}$ weighted by the difference of the replica momenta of its endpoints. It is alternating and also symmetric in characteristic two.

For every odd $n$, the $\mathbb{F}_2$ rank of $\Gamma$ is simply the sum of ranks of $\Upsilon(t)$ over all collective momentum labels $t$: 
\begin{equation} \label{eq:odd-replica-rank-sum}
    r_G(n) = \sum_{t} \rk \Upsilon(t).
\end{equation}
The dependence on the R\'enyi index $n$ only appears in the replica momenta $t$. We prove Eq.~\eqref{eq:odd-replica-rank-sum} in Appendix~\ref{app:rank-reduction}.

\begin{figure}
    \centering
    \includegraphics[width=\linewidth]{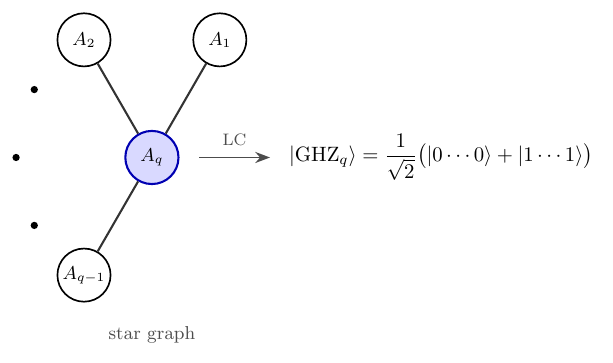}
    \caption{The star graph with center $A_q$ and leaves $A_1, \dots, A_{q-1}$, each vertex in its own party. A Hadamard gate on every leaf, which is a local Clifford (LC) operation, maps the star graph state to the $q$-party GHZ state. The two states therefore have the same R\'enyi multi-entropy.}
    \label{fig:ghz-star-graph}
\end{figure}

Let us test this formula in a simple case. 
The $q$-party GHZ state is locally unitarily equivalent to the graph state of the star graph (see Fig.~\ref{fig:ghz-star-graph}), with the center in one party and one leaf in each of the others~\cite{Hein:2006uvf}. 
Choose the center as the reference party, so that $t_q=1$, and define $w=(t_1-1,\ldots,t_{q-1}-1)^{\mathsf T}$. Ordering the leaves first and the center last, the weighted adjacency matrix is
\[
\Upsilon(t)=
\begin{pmatrix}
0_{(q-1)\times(q-1)} & w\\
w^{\mathsf T} & 0
\end{pmatrix}.
\]
Every leaf column has entries only in the center row, so the leaf columns are all proportional. If $w\ne0$, at least one leaf column is nonzero, and the center column $(w^{\mathsf T},0)^{\mathsf T}$ is independent of it because its nonzero entries lie in the leaf rows. The rank is therefore two. If $w=0$, the entire matrix vanishes. For odd $n$, each of the $q-1$ leaf momenta has $n$ possible values, giving $n^{q-1}$ assignments. Exactly one, with every $t_i=1$, gives rank zero; all others give rank two. Hence
\begin{equation}
r_{\mathrm{GHZ}}(n)=2\bigl(n^{q-1}-1\bigr),\quad
S_n^{(q)}(\mathrm{GHZ})=\frac{n^{q-1}-1}{n-1}\log2.
\label{eq:ghz}
\end{equation}
This agrees with a direct evaluation of the R\'enyi multi-entropy of the GHZ state. The ket copies must be all-zero or all-one, and the permutations force every copy to take the same value, so $Z_n^{(q)}=2\cdot2^{-n^{q-1}}$.

For the star graph, the rank of $\Upsilon(t)$ depends only on which replica momenta coincide, and not on their particular values. Counting these momenta assignments gives a polynomial in $n$ that reproduces the rank sum at every odd integer. 
We turn this property of the GHZ state into the definition of \emph{collision-controlled} graphs. The rank of $\Upsilon(t)$ does not depend on the individual momenta of $t$, but rather only on the pattern of equalities or ``collisions'' in $t$. These patterns can then be counted using simple combinatorics, and the result can be analytically continued to $n \to 1$. 

\section{Collision-controlled graphs}
\label{sec:collision-controlled}

The equalities among $t_1,\ldots,t_q$ divide the parties into groups. We call this partition $\pi(t)$ the collision partition of $t$. Let $\Pi_q$ be the set of all partitions of the $q$ parties, and $|\pi|$ be the number of groups in a given partition $\pi$. For each $\pi\in\Pi_q$, let $G_\pi$ be the graph obtained from $G$ by deleting every edge whose endpoints belong to parties in the same group. When $\pi(t)=\pi$, the nonzero entries of $\Upsilon(t)$ are exactly the edges of $G_\pi$.

For a fixed assignment of vertices to parties, we say that $G$ is collision-controlled if the rank of $\Upsilon(t)$ depends only on the collision partition $\pi$ of $t$ for every assignment of replica momenta $t$, over every finite extension of $\F_2$. We denote this rank by $\varrho_G(\pi)$: for a fixed graph $G$, it depends on the momentum assignment $t$ only through its collision partition $\pi$.

The number of replica momenta with a given collision partition is easy to count. If $\pi$ has $k$ groups, the group containing the reference party has replica momentum one, and the remaining $k-1$ groups have distinct replica momenta picked from the other $n-1$ roots of unity. This can be done in $(n-1)_{k-1}$ ways, where $(x)_0=1$ and $(x)_k=x(x-1)\cdots(x-k+1)$ is the falling factorial. Summing over partitions gives the following result:
\begin{equation}
    r_G(n) = P_G(n), \quad P_G(x) = \sum_{\pi \in \Pi_q} (x-1)_{|\pi|-1} \varrho_G(\pi). 
\end{equation}
The polynomial $P_G$ satisfies $P_G(1)=0$, so Eq.~\eqref{eq:rank-partition-function} has a regular limit at $n=1$. 
Since the derivative of $(x-1)_{k-1}$ at $x=1$ is $(-1)^{k}(k-2)!$ for $k\geq2$, the multi-entropy for a collision-controlled graph state is
\begin{equation}
S_1^{(q)}(G)=\frac{\log2}{2}\sum_{\substack{\pi\in\Pi_q\\|\pi|\geq2}}(-1)^{|\pi|}(|\pi|-2)!\;\varrho_G(\pi).
\label{eq:collision-entropy}
\end{equation}
For the GHZ state, $\varrho_G(\pi)=2$ unless $\pi$ has a single group, and Eq.~\eqref{eq:collision-entropy} reproduces $(q-1)\log2$.

The polynomial $P_G$ is the unique analytic continuation of the odd-replica ranks within the following growth class. Suppose $f(n)=r_G(n)$ at every odd integer $n\geq1$, and $f(z)$ is analytic on a neighborhood of $\operatorname{Re}z\geq1$, with $|f(z)|\leq Ce^{c|z|}$ there and $|f(1+iy)|\leq Ce^{\tau|y|}$ for real $y$, where $C,c$ are finite positive constants and $0\leq\tau<\pi/2$. Setting $z=2w+1$ and applying Carlson's theorem~\cite{Titchmarsh} to $f(2w+1)-P_G(2w+1)$ gives $f=P_G$.

We now ask which graphs are collision-controlled. Fix $t$ and let $\pi=\pi(t)$. The nonzero entries of $\Upsilon(t)$ sit exactly on the edges of $G_{\pi}$. The idea is to now bound the rank of $\Upsilon(t)$. A fundamental result in graph theory \cite{Tutte} says
\begin{equation}
    \rk \Upsilon(t) \leq 2 \nu(G_{\pi}),
\end{equation}
where $\nu(G_{\pi})$ is the number of edges in a maximum matching of $G_{\pi}$. We give a short proof of this bound below. Recall that a matching of a graph is a collection of its edges with no common vertex. A maximum matching is a matching with largest number of edges. 

If our graph is such that the bound is always saturated, then the rank depends only on the collision partition $\pi$, so the graph is collision-controlled. We therefore ask when the bound is saturated.

To answer this, let us also bound the rank of $\Upsilon(t)$ from below. Consider a subset of vertices $U$ of $G_{\pi}$ and the induced weighted adjacency matrix $A(t)$ on the vertices $U$. If $A(t)$ is nonsingular, i.e., $\det A(t) \neq 0$, then we have 
\begin{equation}
    |U| \leq \rk \Upsilon(t). 
\end{equation}

Let $U$ be the set of vertices covered by a maximum matching $M$ of $G_{\pi}$. Then $|U| = 2 \nu(G_{\pi})$. To apply the above bound, however, we need $A(t)$ to be nonsingular. The Pfaffian of $A(t)$ is a sum over perfect matchings of the induced subgraph $H_{\pi}$ on the vertices $U$: 
\begin{equation}
    \Pf A(t) = \sum_{M} \prod_{\{u, v\} \in M} A(t)_{uv}.
\end{equation}
The Pfaffian satisfies $\det A(t)=(\Pf A(t))^2$. This expansion also explains the upper bound on $\rk\Upsilon(t)$. If $\Upsilon(t)$ has rank $2k$, it has a nonsingular principal submatrix of size $2k$, obtained by choosing $2k$ vertices and retaining corresponding rows and columns. The Pfaffian of this submatrix is nonzero, so its expansion contains at least one nonzero term, corresponding to a perfect matching on these $2k$ vertices. Thus $G_\pi$ contains a matching of $k$ edges, implying $k\leq\nu(G_\pi)$ and hence $\rk\Upsilon(t)\leq2\nu(G_\pi)$.

Recall that a perfect matching covers all vertices of the graph. If $H_{\pi}$ is such that $M$ is the only perfect matching, then since $A(t)_{uv}\ne0$ for every edge $\{u,v\}$ of $H_{\pi}$, we get $\Pf A(t) \neq 0$, and 
\begin{equation}
    \rk \Upsilon(t) = 2 \nu(G_{\pi}). 
\end{equation}
An example is shown in Fig.~\ref{fig:matching}. 

\begin{figure}
    \centering
    \includegraphics[width=\linewidth]{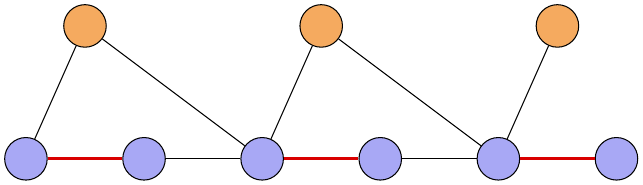}
    \caption{A maximum matching $M$ (red edges) that covers the blue vertices $U$. Numbering the blue vertices from the left, every edge touches the first, third, or fifth one, so no matching has more than three edges and $\nu = 3$. The subgraph induced on $U$ is a path, and $M$ is its only perfect matching. The Pfaffian of $A(t)$ is therefore the product of the three red edge weights. It is nonzero, so $\rk \Upsilon(t) = |U|= 2\nu = 6$.}
    \label{fig:matching}
\end{figure}

We call $G$ \emph{matching-controlled} if, for every collision partition $\pi$, the graph $G_\pi$ has a maximum matching that is the only perfect matching of the subgraph induced on the vertices it covers.

\begin{proposition}
\label{prop:matching}
Every matching-controlled graph is collision-controlled, with $\varrho_G(\pi)=2\nu(G_\pi)$ for every collision partition $\pi$.
\end{proposition}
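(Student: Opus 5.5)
The plan is to sandwich $\rk\Upsilon(t)$ between two bounds that both equal $2\nu(G_\pi)$, for an arbitrary momentum assignment $t$ over an arbitrary finite extension $\F_{2^m}$. Fix such a $t$ and let $\pi=\pi(t)$ be its collision partition. First I record the support of $\Upsilon(t)$. By Eq.~\eqref{eq:Upsilon-t}, the entry $\Upsilon(t)_{uv}=\gamma_{uv}(t_{\alpha(u)}-t_{\alpha(v)})$ is nonzero exactly when $\{u,v\}$ is an edge of $G$ and $t_{\alpha(u)}\neq t_{\alpha(v)}$. The second condition says that $\alpha(u)$ and $\alpha(v)$ lie in different groups of $\pi$. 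So the nonzero entries of $\Upsilon(t)$ sit exactly on the edges of $G_\pi$, and every edge of $G_\pi$ carries a nonzero field element.

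For the upper bound I will invoke the Pfaffian argument already given. A rank-$2k$ alternating matrix has a nonsingular $2k\times2k$ principal submatrix. That submatrix has a nonzero Pfaffian, so some term of its expansion is nonzero, and that term is a perfect matching of $k$ edges of $G_\pi$. Hence $\rk\Upsilon(t)\le2\nu(G_\pi)$. The only point to check is that the rank of an alternating matrix over a field of characteristic two is even and is attained on a principal submatrix. This is the standard symplectic normal form over any field, and it holds because $\Upsilon(t)$ has zero diagonal, not merely because it is symmetric.

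For the lower bound I use the matching-controlled hypothesis. It gives a maximum matching $M$ of $G_\pi$ which is the unique perfect matching of the induced subgraph $H_\pi$ on the vertex set $U$ it covers, with $|U|=2\nu(G_\pi)$. Let $A(t)$ be the principal submatrix of $\Upsilon(t)$ on $U$. Its Pfaffian is a sum over perfect matchings of the support graph of $A(t)$, and that support graph is exactly $H_\pi$. Since $M$ is the only perfect matching, $\Pf A(t)=\prod_{\{u,v\}\in M}(t_{\alpha(u)}-t_{\alpha(v)})$. This is a product of nonzero elements of a field, so it is nonzero, and no cancellation can occur because there is only one term. Therefore $\det A(t)=(\Pf A(t))^2\neq0$, and $\rk\Upsilon(t)\ge|U|=2\nu(G_\pi)$.

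Combining the two bounds gives $\rk\Upsilon(t)=2\nu(G_\pi)$. The right side depends on $t$ only through $\pi(t)$, and this holds for every $t$ and every finite extension of $\F_2$. That is precisely the definition of collision-controlled, with $\varrho_G(\pi)=2\nu(G_\pi)$. I expect no real obstacle, only two points of care. The first is to state clearly that the Pfaffian expansion has terms only on edges of the support graph, so that uniqueness of the perfect matching in $H_\pi$ leaves a single term. The second is that the hypothesis is needed for every partition $\pi$, including those not realized at a given $n$. Every $\pi$ with $|\pi|\le n$ does occur for some $t$, but the claim only needs the statement for those $\pi$ that arise, so this causes no difficulty.
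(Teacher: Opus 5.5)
Your proof is correct and is essentially the paper's own argument: the upper bound $\rk\Upsilon(t)\le 2\nu(G_\pi)$ comes from a nonzero Pfaffian term of a nonsingular principal submatrix, and the matching lower bound comes from the single-term Pfaffian on the vertices covered by a maximum matching that is the unique perfect matching of its induced subgraph. One small point of attribution: the existence of a nonsingular $r\times r$ principal submatrix in a rank-$r$ symmetric or alternating matrix is not the symplectic normal form itself, but the separate standard fact that a set of independent rows indexes a nonsingular principal block; this fact also holds over any field, so the argument is unaffected.
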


The discussion so far might suggest matching-controlled graphs are very special and complicated. However, we have the following:

\begin{proposition}
\label{lem:no-even-cycle}
If $G$ contains no even cycle, then $G$ is matching-controlled.
\end{proposition}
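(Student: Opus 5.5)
Our plan is to show that, if $G$ has no even cycle, then for every collision partition $\pi$, \emph{any} maximum matching $M$ of $G_\pi$ is the only perfect matching of the subgraph $H_\pi$ induced on the vertex set $U$ that $M$ covers. Two reductions come first. Since $G_\pi$ is obtained from $G$ by deleting edges, it is a subgraph of $G$, and so it also contains no even cycle. The induced subgraph $H_\pi$ is in turn a subgraph of $G_\pi$. It therefore suffices to prove the following purely graph-theoretic claim: a graph $H$ with no even cycle has at most one perfect matching. We then apply this claim to $H=H_\pi$, which has at least one perfect matching, namely $M$.

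To prove the claim, suppose $M$ and $M'$ are two distinct perfect matchings of $H$, and consider the symmetric difference $M\triangle M'$. Each vertex of $H$ is incident to exactly one edge of $M$ and exactly one edge of $M'$. Hence every vertex has degree $0$ or $2$ in $M\triangle M'$: degree $0$ when its $M$-edge and $M'$-edge coincide, and degree $2$ otherwise. So $M\triangle M'$ is a disjoint union of cycles. Along each such cycle the edges alternate between $M$ and $M'$, so every cycle has even length; it also has length at least $4$, because the graph is simple and the two edges at each vertex are distinct. Since $M\ne M'$, the symmetric difference is nonempty and therefore contains an even cycle of $H\subseteq G$. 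This contradicts the hypothesis, so the perfect matching is unique.

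Putting the pieces together: for each $\pi$, we choose any maximum matching $M$ of $G_\pi$ and let $U$ be the vertices it covers. The claim shows that $M$ is the unique perfect matching of $H_\pi$, which is exactly the definition of matching-controlled. Proposition~\ref{prop:matching} then gives $\varrho_G(\pi)=2\nu(G_\pi)$. I do not expect a real obstacle here. The only points needing care are that deleting edges cannot create an even cycle, and that the alternating structure forces each component of $M\triangle M'$ to be an even cycle. The latter is the standard alternating-cycle argument.
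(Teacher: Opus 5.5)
Your proof is correct and follows the same route as the paper: deleting edges cannot create an even cycle, and a second perfect matching on the same vertex set would make the symmetric difference $M\triangle M'$ a union of alternating, hence even, cycles. You add one detail the paper's one-line argument leaves implicit, namely that every vertex has degree $0$ or $2$ in $M\triangle M'$, so its components really are cycles.
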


\begin{proof}
Each $G_\pi$ is obtained by deleting edges, so it has no even cycle either. If a maximum matching $M$ of $G_\pi$ is not a unique perfect matching of the subgraph induced on its vertices, there is a second perfect matching $M'$ on the same vertices, and the symmetric difference of $M$ and $M'$ would contain a cycle alternating between the two, necessarily of even length.
\end{proof}

This covers all forests and all graphs whose cycles are odd. For a forest with one qubit per party, Eq.~\eqref{eq:collision-entropy} gives one unit of $\log2$ per edge, and for an odd cycle $C_{2r+1}$ it gives $(r+1)\log2$. We derive both values in Appendix~\ref{app:collision-examples}. 

\subsection{Local complementation} 

Local complementation at a vertex $v$ toggles the edges between its neighbors, adding absent edges and removing present ones, while leaving all other edges unchanged. Denote the resulting graph by $LC_v(G)$, and write $\Upsilon_G(t)$ for the weighted adjacency matrix associated with $G$. Local complementation is implemented by local Clifford unitaries on graph states~\cite{VandenNest}. 

\begin{proposition}
\label{prop:lc-invariance}
If $G$ is collision-controlled and $H=LC_v(G)$, with the assignment of vertices to parties unchanged, then $H$ is collision-controlled with $\varrho_H(\pi)=\varrho_G(\pi)$ for every collision partition $\pi$.
\end{proposition}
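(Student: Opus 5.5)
The plan is to prove something stronger than the statement: for every momentum assignment $t$, over every finite extension of $\F_2$, the matrices $\Upsilon_H(t)$ and $\Upsilon_G(t)$ are congruent, i.e.\ $\Upsilon_H(t)=R\,\Upsilon_G(t)\,R^{\mathsf T}$ for some invertible $R$. Then $\rk\Upsilon_H(t)=\rk\Upsilon_G(t)$ for every $t$. Since the rank for $G$ depends only on $\pi(t)$, the same holds for $H$, and $\varrho_H(\pi)=\varrho_G(\pi)$. Note that the local-unitary invariance of $Z_n^{(q)}$ only gives equality of the sums $r_G(n)=r_H(n)$, so we need this per-$t$ statement.

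\emph{Step 1: factorize $\Upsilon$ through the stabilizer matrix.} Write $t_u\equiv t_{\alpha(u)}$ and $D=\operatorname{diag}(t_u)$. In characteristic two, Eq.~\eqref{eq:Upsilon-t} reads $\Upsilon_G(t)=D\gamma+\gamma D$. Let $S_G=(I\,|\,\gamma)$ be the $N\times 2N$ binary stabilizer matrix of $|G\rangle$ in $(x|z)$ coordinates. Set
\[
B_t=\begin{pmatrix}0 & D\\ D & 0\end{pmatrix}.
\]
A direct multiplication gives $S_G B_t S_G^{\mathsf T}=\gamma D+D\gamma=\Upsilon_G(t)$, using $\gamma^{\mathsf T}=\gamma$. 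After reordering coordinates vertex by vertex, $B_t=\bigoplus_u t_u\,\omega$, where $\omega=\begin{pmatrix}0&1\\1&0\end{pmatrix}$ is the single-qubit symplectic form over $\F_2$.

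\emph{Step 2: local Cliffords preserve $B_t$.} Local complementation at $v$ is implemented by a local Clifford unitary~\cite{VandenNest}. Concretely, this is $\sqrt{-iX}$ on $v$ and $\sqrt{iZ}$ on its neighbours, up to Paulis. On binary symplectic coordinates it acts as a block-diagonal matrix $C=\bigoplus_u C_u$ with each $C_u\in SL(2,\F_2)$. Every $C_u$ satisfies $C_u\omega C_u^{\mathsf T}=\omega$, because $\det C_u=1$ and $2\times2$ symplectic equals determinant one. Since the party assignment is unchanged, each vertex keeps its scalar $t_u$, and therefore $CB_tC^{\mathsf T}=B_t$. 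This is exactly where locality is used: a Clifford mixing vertices in different parties would not commute with the party-dependent weights $t_u$.

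\emph{Step 3: compare stabilizer matrices.} The Clifford maps the stabilizer group of $|G\rangle$ onto that of $|H\rangle$. Signs are irrelevant in the binary picture. Hence the row spaces of $S_GC$ and $S_H=(I\,|\,\gamma_H)$ coincide, and both have rank $N$. So $S_H=RS_GC$ for some $R\in GL(N,\F_2)$. Combining the three steps,
\[
\Upsilon_H(t)=RS_GCB_tC^{\mathsf T}S_G^{\mathsf T}R^{\mathsf T}=R\,\Upsilon_G(t)\,R^{\mathsf T}.
\]
Since $R$ is invertible over $\F_2$, it remains invertible over every extension $\F_{2^m}$, so the rank is preserved.

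I expect the main obstacle to be Step 2. There I must check carefully that the symplectic action of the LC unitary is truly block-diagonal vertex by vertex, and that transposes are placed correctly, i.e.\ whether the action is $C$ or $C^{\mathsf T}$ on row vectors. If that bookkeeping becomes awkward, I would fall back on a direct computation. One can write $\gamma_H=\gamma+\gamma_v\gamma_v^{\mathsf T}+\operatorname{diag}$ correction, where $\gamma_v$ is the $v$-th column of $\gamma$, and exhibit $R=I+e_v\gamma_v^{\mathsf T}$ (or a close variant) so that $R\,\Upsilon_G(t)\,R^{\mathsf T}=\Upsilon_H(t)$ can be checked entrywise.
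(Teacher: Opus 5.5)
Your proof is correct. It reaches the same congruence as the paper by a different route.

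The paper never leaves $\Upsilon_G(t)$. It adds row $v$ to the row of each neighbour of $v$, then column $v$ to each neighbour's column. It then checks entrywise that for $u,w\in N(v)$ the entry picks up $(t_v-t_w)+(t_u-t_v)=t_u-t_w$, which toggles the edge, while the diagonal changes cancel. In matrix form this is $\Upsilon_H(t)=R\,\Upsilon_G(t)\,R^{\mathsf T}$ with $R=I+\gamma_v e_v^{\mathsf T}$.

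Your factorization $\Upsilon_G(t)=S_GB_tS_G^{\mathsf T}$, with $B_t=\bigoplus_u t_u\omega$ preserved by local symplectic maps, explains where this $R$ comes from. Under $\sqrt{-iX_v}\prod_{u\in N(v)}\sqrt{iZ_u}$, the stabilizer generator of $v$ is fixed. The generator of each neighbour $u$ goes, up to sign, to the product of its new generator with that of $v$. So the $R$ in $S_H=RS_GC$ is exactly $I+\gamma_v e_v^{\mathsf T}$. In your fallback it is this matrix, not $I+e_v\gamma_v^{\mathsf T}$, that works. Your worry about $C$ versus $C^{\mathsf T}$ in Step 2 is moot, since $C_u^{\mathsf T}$ also lies in $SL(2,\F_2)$ and preserves $\omega$.

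Your route gives a conceptual reason and more generality:
\begin{itemize}
\item It handles any local Clifford relating two graph states with the same party assignment at once, without decomposing it into local complementations.
\item It also covers Cliffords acting jointly on the qubits of one party, since the block of $B_t$ on party $a$ is $t_a$ times the symplectic form there.
\item $SB_tS^{\mathsf T}$ defines the weighted ranks for an arbitrary stabilizer state, independently of the choice of generators.
\end{itemize}
The paper's route is self-contained and elementary. It needs neither the binary symplectic representation of Cliffords nor the explicit LC unitary, only the identity $(t_u-t_v)+(t_v-t_w)=t_u-t_w$.
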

In fact, local complementation preserves the weighted rank even without assuming collision control. For any graph $G$ and $H=LC_v(G)$, with the same party assignment,
\[
\rk\Upsilon_H(t)=\rk\Upsilon_G(t)
\]
for every momentum assignment over every finite extension of $\mathbb{F}_2$. We prove this stronger statement in Appendix~\ref{app:rank-reduction}. Thus every graph obtained from a matching-controlled graph by local complementations is collision-controlled and has the same polynomially continued multi-entropy, even if it is not itself matching-controlled.

The four-cycle illustrates why matching control need not be preserved. It is related by local complementations to the four-vertex path, which is matching-controlled, but its two perfect matchings prevent it from being matching-controlled itself. Proposition~\ref{prop:lc-invariance} nevertheless guarantees collision control. We can also verify this directly. Taking the edge weights $w_{uv}=t_u-t_v$, its two perfect matchings contribute
\begin{equation}
\Pf \Upsilon_{C_4}(t)=w_{12}w_{34}+w_{23}w_{41}=(t_1-t_3)(t_2-t_4),
\label{eq:c4-pfaffian-factorization}
\end{equation}
so the rank drops only when opposite momenta coincide. Thus $C_4$ is collision-controlled even though it has two perfect matchings.

More generally, let $K_{m_1,m_2}$ have vertex classes of sizes $m_1$ and $m_2$, with every edge between them, and one vertex in each party. Denote the momenta on the two classes by $u_i$ and $v_j$, with $v_{m_2}=1$ as the reference. The weighted adjacency matrix is
\begin{equation}
\Upsilon =\begin{pmatrix}0&L\\L^{\mathsf T}&0\end{pmatrix},\qquad L_{ij}=u_i+v_j,
\end{equation}
so $\rk \Upsilon =2\rk L$. 
Let $u=(u_1,\ldots,u_{m_1})^{\mathsf T}$ and $v=(v_1,\ldots,v_{m_2})^{\mathsf T}$, and let $\mathbf1_k$ denote the column vector of $k$ ones. Writing $L=u\mathbf1_{m_2}^{\mathsf T}+\mathbf1_{m_1}v^{\mathsf T}$ shows that $\rk L\leq2$, with $\rk L=2$ unless all $u_i$ are equal or all $v_j$ are equal, and $\rk L=0$ only when all momenta coincide. The rank is therefore controlled by collisions. The two degenerate sets contain $n^{m_2}$ and $n^{m_1}$ momenta and intersect in $n$ of them, which gives, at every odd $n$,
\begin{equation}
P_{K_{m_1,m_2}}(n)=4n^{m_1+m_2-1}-2n^{m_1}-2n^{m_2}+2n-2
\label{eq:complete-bipartite}
\end{equation}
and $S_1(K_{m_1,m_2})=(m_1+m_2-1)\log2$. Here and below we suppress the superscript $(q)$ when every vertex is its own party. 
Collision control can again be understood through Proposition~\ref{prop:lc-invariance}. Choose an edge $ab$, with $a$ in the first vertex class and $b$ in the second. Local complementations successively at $a,b,a$ transform $K_{m_1,m_2}$ into a tree with two adjacent centers, with $m_1-1$ leaves attached to $a$ and $m_2-1$ leaves attached to $b$. This tree is matching-controlled, so its LC-equivalent graph $K_{m_1,m_2}$ is collision-controlled.

We now ask whether this polynomial continuation behaves like an entanglement measure.

\section{Monotonicity under local measurements} 
\label{sec:monotonicity}

A measure of entanglement should not increase, on average, under local operations. For graph states the natural test is a local Pauli measurement, which acts on the graph by a local Clifford transformation followed by deletion of the measured vertex \cite{VandenNest, HeinEisert, RaussendorfBriegel}. 
By Proposition~\ref{prop:lc-invariance}, the local complementations leave every weighted rank unchanged, so it suffices to study vertex deletion.

Deleting a vertex $v$ removes the row and the column of $\Upsilon(t)$ that belong to $v$, resulting in $\Upsilon'(t)$. The resulting matrices are principal submatrices, and a principal submatrix cannot have larger rank. Therefore,
\begin{equation}
\rk \Upsilon(t) \geq \rk \Upsilon'(t) 
\label{eq:pointwise-deletion-rank}
\end{equation}
for every assignment of replica momenta. Summing over $t$ we conclude that the R\'enyi multi-entropy, for any odd $n > 1$, does not increase under a local Pauli measurement, i.e., 
\begin{equation}
    S_n^{(q)}(G \setminus v)\leq S_n^{(q)}(G).
\end{equation}
One bookkeeping remark is needed before comparing continued values. Choose an unmeasured party as the reference. If the measured vertex was alone in its party, the replica momentum of that party no longer appears in $\Upsilon'(t)$, and summing over it multiplies the rank sum by $n$. Since the polynomial vanishes at $n=1$, this factor does not change its derivative there, so keeping or discarding the empty party gives the same continued value.

The inequality holds at each odd $n$ separately, but the continued value is a signed combination of the ranks, Eq.~\eqref{eq:collision-entropy}, and deleting a vertex can remove negative terms as well as positive ones. A $Z$ measurement at any vertex of the seven-cycle $C_7$ leaves the six-vertex path $P_6$ (see Fig.~\ref{fig:C7_to_P6}), and their multi-entropies are $S_1(C_7)=4\log2$ and $S_1(P_6)=5\log2$. This demonstrates that the $n \to 1$ limit is not monotonic even though the R\'enyi versions are monotonic at every odd integer $n>1$. 

For instance, at $n=3$, we have $r_{C_7}(3)=3976$, while the path with an empty seventh party has rank sum $3\,r_{P_6}(3)=3180$. The reversal is produced entirely by the continuation. More generally, deleting a vertex of $C_{2r+1}$ leaves the path $P_{2r}$ on $2r$ vertices, and $S_1(C_{2r+1})=(r+1)\log2$, $S_1(P_{2r})=(2r-1)\log2$.
The continued value increases for every $r\geq3$.

All outcomes of a Pauli measurement on a graph state differ by local Clifford corrections and have the same continued value, so the increase survives averaging over outcomes. The polynomial continuation is not an entanglement monotone under local operations and classical communication~\cite{Vidal}. Monotonicity at every odd integer $n>1$, therefore, does not imply monotonicity of the continued value at $n=1$.

\textit{Discussion.} For qubit graph states, the replica contraction at every odd replica number is $2^{-r_G(n)/2}$, where $r_G(n)$ is a sum of ranks of small matrices over a finite field. When these ranks  depend only on which replica momenta coincide, the odd-$n$ sequence has a unique polynomial continuation. This class contains all graphs without even cycles and all complete bipartite graphs. Since $r_G(n)$ is a local-unitary invariant, the same polynomial continuation also holds for every graph locally equivalent to one of these. Even within this class, the multi-entropy can increase under a local Pauli measurement, although the odd-$n$ value is non-increasing. The multi-entropy in Eq.~\eqref{eq:collision-entropy} weighs the collision partitions with alternating signs, so the pointwise rank inequality does not carry over to the derivative at $n=1$. For $C_7$ and $P_6$, the exact ranks at $n = 3$ are monotonic, $3976 > 3180$, yet the continued values are $4 \log 2 < 5 \log 2$. Every odd cycle with seven or more vertices gives such a violation. 

Our proofs use odd $n$, but numerically the even-$n$ values behave the same way. In the examples checked at $n=2,4,6$, the exact rank $r_G(n)$ agrees with the odd-$n$ polynomial, and the Gauss sums are positive and nonzero. If this holds in general, the polynomial reproduces the data at every integer $n \geq 2$. Carlson’s theorem then fixes the continuation within its
usual growth class.

If monotonicity under local operations is required, the $n\to1$ prescription must therefore differ from this polynomial continuation. By Carlson's theorem, any alternative analytic continuation of the same odd-replica ranks must fall outside the specified growth class. This leaves open which analytic or physical conditions should select the limit for a general state. In holography the bulk geometry supplies a prescription through the soap-film area. Identifying its counterpart for general quantum states would clarify which multipartite property that area computes.

For graph states of qudits with prime local dimension \(p\), the replica contraction is a quadratic Gauss sum over the finite field \(\mathbb F_p\). The replica translations can again be diagonalized when \(n\) is not divisible by \(p\), reducing the calculation to small finite-field matrices. To obtain an entropy–rank formula, we must also determine the phase of the Gauss sum. It remains to formulate the appropriate notion of collision control for qudit graph states and determine when it gives a polynomial continuation.

\textit{Acknowledgments.} This work is supported by the Department of Atomic Energy, Government of India, under Project Identification No. RTI-4012, and the Infosys Endowment for the study of the Quantum Structure of Spacetime.

\textit{AI Usage Statement.} We used Claude Opus 5.5 (Anthropic) and GPT-6 Astra (OpenAI) in the preparation of this paper. \textbf{Drafting.} The appendices were first drafted using LLMs and later edited by the authors. The main text was written by the authors. \textbf{Figures.} The TikZ code for all the figures was LLM-generated. \textbf{Language.} LLMs were used to correct typos, ensure correct grammar usage, and improve the clarity of some sentences in the main text. \textbf{Error-checking.} LLMs were used to check for errors and gaps in the argument. \textbf{Research Content.} The research question and the main results are the authors' own. The definition of collision-controlled graphs, the matching/Pfaffian criterion, and the proof of the positivity of the Gauss sum in Appendix \ref{app:A3} were developed with LLM assistance. No AI tool is an author, and the authors take full responsibility for the content of the paper.

\bibliography{references}

\appendix

\section{Proofs for odd \texorpdfstring{$n$}{n}}
\label{app:rank-reduction}

In this appendix we show that, for odd $n$, the replica partition function of a graph state is $Z_n^{(q)}=2^{-r_G(n)/2}$, where $r_G(n)=\sum_t\rk\Upsilon(t)$ [Eqs.~\eqref{eq:rank-partition-function} and~\eqref{eq:odd-replica-rank-sum}]. We also prove that local complementation does not change the ranks of $\Upsilon(t)$ (Proposition~\ref{prop:lc-invariance}). We sketch the arguments rather than write them out in full. The standard facts about finite fields that we use can be found in Ref.~\cite{LidlNiederreiter}.

\paragraph*{The idea.}
The big graph $\mathcal G$ is made of $n^{q-1}$ identical copies of $G$, one at each site of the replica lattice. If we move every copy by the same lattice vector, $\mathcal G$ does not change. The calculation of $Z_n^{(q)}$ is therefore a translation-invariant quadratic problem on a lattice, like a free field, and we solve it in the same way: in momentum space. Three familiar facts carry over.
\begin{enumerate}
\item \emph{Momentum is conserved.} The adjacency matrix $\Gamma$ of $\mathcal G$ does not mix different replica momenta $t$. It splits into small $N\times N$ blocks, one for each $t$, and the block is $\Upsilon(t)$. The rank of $\Gamma$ is then the sum of the ranks of the blocks. This is the rank sum.
\item \emph{The modes decouple.} $Z_n^{(q)}$ is an average of $(-1)^{\Phi_{\mathcal G}}$ over bit strings, which is a discrete version of a Gaussian integral. In momentum space, each mode $t$ couples only to the mode $t^{-1}$, just as $\phi_k$ couples only to $\phi_{-k}$. The average therefore becomes a product of small averages, each over a few modes.
\item \emph{Each small average is positive.} It contributes a factor $2^{-\rk\Upsilon(t)/2}$ for each momentum $t$ that it contains. A minus sign can occur only for modes that are their own ``complex conjugates.'' There it cancels, because $\Upsilon(t)$ is alternating and its rank is even.
\end{enumerate}
The only new feature is that all arithmetic is modulo two. The Fourier phases $e^{2\pi ik/n}$ must be replaced by $n$th roots of unity in a finite field, and there are enough of them only for odd $n$. This is the origin of the restriction to odd $n$.

\subsection{Fourier transform modulo two}

A Fourier transform on $n$ sites needs $n$ distinct $n$th roots of unity. The field $\F_2=\{0,1\}$ has only one, namely $1$. We therefore enlarge $\F_2$, just as one enlarges the real numbers to the complex numbers to find $e^{2\pi i/n}$. The finite field $\F_{2^m}$ with $2^m$ elements contains $\F_2$, and its nonzero elements form a cyclic group of order $2^m-1$. It therefore contains $n$ distinct $n$th roots of unity whenever $n$ divides $2^m-1$, and for odd $n$ we can always choose $m$ so that it does. For example, for $n=3$, the four-element field $\F_4=\{0,1,\omega,\omega^2\}$, with $\omega^2=\omega+1$, contains the cube root of unity $\omega$. We fix such a field and call it $\F$. For even $n$ the construction fails, because there are not enough distinct roots.

A replica momentum $t=(t_1,\ldots,t_{q-1})$ assigns an $n$th root of unity $t_a\in\F$ to each lattice direction. As in the main text, $t_q=1$ for the reference party. For a vertex $u$ of $G$, we write $t_u=t_{\alpha(u)}$ and $e_u=e_{\alpha(u)}$ for the momentum and the lattice direction of its party. The plane wave with momentum $t$ at the replica site $s=(s_1,\ldots,s_{q-1})$ is
\begin{equation}
\chi_t(s)=t_1^{s_1}\,t_2^{s_2}\cdots t_{q-1}^{s_{q-1}},
\end{equation}
which is the analogue of $e^{ik\cdot s}$. Plane waves are orthogonal for the usual reason: a nontrivial root of unity, summed over a full period, gives zero. Hence the replica Fourier transform
\begin{equation}
\hat x(t)=\sum_{s}\chi_t(s)^{-1}x_s,\qquad x_s=\sum_{t}\chi_t(s)\,\hat x(t),
\label{eq:replica-fourier}
\end{equation}
works as usual. Because $n=1$ modulo two, it does not even need a factor of $1/n$. It also has the familiar properties. A shift $x_s\to x_{s+c}$ multiplies $\hat x(t)$ by the phase $\chi_t(c)$, and Parseval's theorem reads
\begin{equation}
\sum_{s}x_s\,y_s=\sum_{t}\hat x(t)\,\hat y(t^{-1}),
\label{eq:parseval}
\end{equation}
with the inverse momentum $t^{-1}=(t_1^{-1},\ldots,t_{q-1}^{-1})$ in the role of $-k$.

\subsection{Momentum conservation and the rank sum}

The matrix $\Gamma$ commutes with the replica translations, so it cannot mix different momenta. In the sector with momentum $t$, it acts as an $N\times N$ matrix. We now show that this matrix is $\Upsilon(t)$, up to a rescaling that does not change the rank.

A vector $\mathbf x$ on the big graph has one bit $x_{s,u}$ for each copy $s$ and each vertex $u$ of $G$. We Fourier transform in $s$ separately for each $u$, and we collect the results in a vector $\hat x(t)$ with entries $\hat x_u(t)$. Each edge $\{u,v\}$ of $G$ gives two kinds of edges in $\mathcal G$ (the black and red edges in Fig.~\ref{fig:big_q3_n2}). An edge of the first kind joins $u$ and $v$ inside the copy $s$. An edge of the second kind comes from the permutations, and joins $u$ in copy $s$ to $v$ in copy $s+e_u-e_v$. Thus
\begin{equation}
(\Gamma\mathbf x)_{s,u}=\sum_v\gamma_{uv}\bigl(x_{s,v}+x_{s+e_u-e_v,\,v}\bigr).
\label{eq:Gamma-action}
\end{equation}
In momentum space, the shift in the second term becomes the phase $\chi_t(e_u-e_v)=t_u/t_v$. The edge therefore carries the weight
\begin{equation}
1+\frac{t_u}{t_v}=\frac{t_u-t_v}{t_v},
\end{equation}
where we used $1=-1$ modulo two. The factor $1/t_v$ only rescales column $v$. Apart from it, this is the entry of $\Upsilon(t)$ in Eq.~\eqref{eq:Upsilon-t}. The rank of $\Gamma$ is therefore $\sum_t\rk\Upsilon(t)$, which is Eq.~\eqref{eq:odd-replica-rank-sum}.

The weight $t_u-t_v$ has a simple meaning: an edge is invisible when its two endpoints carry the same momentum. In particular, an edge inside one party always drops out. In the big graph, the two edges that it produces then join the same pair of vertices, and they cancel modulo two. This is as it should be: a controlled-$Z$ gate inside one party is a local unitary, so it cannot change the R\'enyi multi-entropy.

\subsection{The Gauss sum is positive}
\label{app:A3}

The rank fixes only the magnitude of $Z_n^{(q)}$. For a general quadratic form, the average of $(-1)^{\Phi}$ is either zero or $\pm2^{-\rk\Gamma/2}$~\cite{Czech:2026cxw, Arf}. To find the sign, we evaluate $Z_n^{(q)}$ as one evaluates the Gaussian integral of a free lattice field. We write the exponent in momentum space, find the independent modes, and integrate over them one group at a time.

\paragraph*{The exponent in momentum space.}
If we list the edges of $\mathcal G$ copy by copy, as in Eq.~\eqref{eq:Gamma-action}, we get
\begin{equation}
\Phi_\mathcal G(\mathbf x)=\sum_{(u,v)}\,\sum_{s}x_{s,u}\bigl(x_{s,v}+x_{s+e_u-e_v,\,v}\bigr).
\end{equation}
With Parseval's theorem and the shift rule, this becomes
\begin{equation}
\Phi_\mathcal G(\mathbf x)=\sum_{\{t,\,t^{-1}\}}\hat x(t)^{\mathsf T}M(t)\,\hat x(t^{-1}).
\label{eq:phi-momentum}
\end{equation}
Here $M(t)$ is $\Upsilon(t)$ with row $u$ divided by $t_u$, and the sum is over the unordered pairs $\{t,t^{-1}\}$ other than zero momentum, which drops out. As for a free field, each mode couples only to the mode with the opposite momentum.

\paragraph*{Independent modes.}
The bits $x_{s,u}$ are independent, but their Fourier components are not. For a real field, $\phi_{-k}=\phi_k^*$, so only half of the modes are independent. The analogue here is
\begin{equation}
\hat x(t)^2=\hat x(t^2),
\label{eq:reality}
\end{equation}
which holds entry by entry, because a bit is equal to its own square and $(a+b)^2=a^2+b^2$ modulo two. Squaring thus plays the role of complex conjugation. Repeated squaring moves $t$ around an orbit $O=\{t,t^2,t^4,\ldots\}$, and Eq.~\eqref{eq:reality} fixes every mode on the orbit in terms of one of them, $\hat x(t_O)$. If the orbit has $k_O$ momenta, then $k_O$ squarings bring $\hat x(t_O)$ back to itself, so its entries lie in the subfield $\F_{2^{k_O}}$. These representative modes are the independent variables: an average over all bit strings $\mathbf x$ is the same as independent averages over the $\hat x(t_O)$.

\paragraph*{Factorization.}
A term of Eq.~\eqref{eq:phi-momentum} contains the modes on an orbit $O$ and on the inverse orbit $O^{-1}=\{t^{-1}:t\in O\}$. We collect the terms in groups, one for each set $O\cup O^{-1}$. Different groups contain different independent modes, so $Z_n^{(q)}$ is a product of averages, one for each group. The zero-momentum orbit $\{1\}$ does not appear, and it contributes $1$. We show below that every other group contributes
\begin{equation}
2^{-|O\cup O^{-1}|\,\varrho_O/2},
\label{eq:orbit-factor}
\end{equation}
which is a factor $2^{-\rk\Upsilon(t)/2}$ for each momentum $t$ in the group. Here $\varrho_O$ is the rank of $\Upsilon(t)$, which is the same for every $t$ in the group: replacing $t$ by $t^2$ squares the entries of $\Upsilon(t)$, replacing $t$ by $t^{-1}$ rescales its rows and columns, and neither changes the rank. The product over all groups is $2^{-\sum_t\rk\Upsilon(t)/2}=2^{-r_G(n)/2}$, which is the result.

There are two kinds of groups. They behave like the two familiar kinds of Gaussian integral: over two independent fields, and over one complex field.
\begin{itemize}
\item If $O^{-1}\neq O$, the modes $\hat x(t)$ and $\hat x(t^{-1})$ are independent. The group is like two independent fields $a$ and $b$ with the action $a^{\mathsf T}Mb$. The average over $b$ gives a delta function, which has no sign.
\item If $O^{-1}=O$, the mode $\hat x(t^{-1})$ is fixed by $\hat x(t)$: it is its ``complex conjugate.'' The group is like one complex field $a$ with a Hermitian action $a^\dagger Ma$. A sign can now occur, but it cancels because the rank of $\Upsilon(t)$ is even.
\end{itemize}

\paragraph*{The trace.}
The modes $\hat x(t)$ take values in the large field $\F$, but the exponent $\Phi_{\mathcal G}$ is a bit. The reality condition connects the two. Within a group, the terms of Eq.~\eqref{eq:phi-momentum} are successive squares of a single quantity $X$, which lies in a subfield $\F_{2^k}$. They therefore add up to
\begin{equation}
\operatorname{tr}_k(X)=X+X^2+X^4+\cdots+X^{2^{k-1}}.
\end{equation}
This field trace (not a matrix trace) plays the role of $X+X^*=2\operatorname{Re}X$, which turns a complex number into a real one. It is linear, and its value is always $0$ or $1$. Its key property is the analogue of $\int dz\,e^{i\operatorname{Re}(\bar cz)}\propto\delta(c)$:
\begin{equation}
\bigl\langle(-1)^{\operatorname{tr}_k(cz)}\bigr\rangle_z=
\begin{cases}1, & c=0,\\ 0, & c\neq0,\end{cases}
\label{eq:trace-sum}
\end{equation}
where $\langle\cdots\rangle_z$ is the average over all $z$ in $\F_{2^k}$.

\paragraph*{Groups with $O\neq O^{-1}$: two independent fields.}
Let $k$ be the number of elements of $O$, and let $M=M(t_O)$, $a=\hat x(t_O)$, and $b=\hat x(t_O^{-1})$. The group contributes $\operatorname{tr}_k(a^{\mathsf T}Mb)$ to the exponent. This is linear in $b$, so by Eq.~\eqref{eq:trace-sum} the average over $b$ is a delta function: it is $1$ if $a^{\mathsf T}M=0$ and $0$ otherwise. What remains is the fraction of vectors $a$ with $a^{\mathsf T}M=0$. These vectors form a subspace of codimension $\varrho_O$ over $\F_{2^k}$, so the fraction is $2^{-k\varrho_O}$. The group has $2k$ momenta, so this is Eq.~\eqref{eq:orbit-factor}. In continuum language, this is the integral $\int da\,db\,e^{ia^{\mathsf T}Mb}$, which is a delta function in disguise and has no sign.

\paragraph*{Groups with $O=O^{-1}$: one complex field.}
Now $t_O^{-1}$ is on the orbit of $t_O$, exactly halfway around: $t_O^{-1}=t_O^{\,Q}$, where $Q=2^j$ and the orbit has $2j$ elements. The map $z\mapsto\bar z\equiv z^Q$ is then a true complex conjugation. If we apply it twice we get the identity, and it leaves the subfield $\F_Q$ unchanged. The pair $\F_{Q^2}\supset\F_Q$ thus plays the role of $\mathbb C\supset\mathbb R$. Let $a=\hat x(t_O)$ and $M=M(t_O)$. Eq.~\eqref{eq:reality} gives $\hat x(t_O^{-1})=\bar a$, and similarly $\bar M=M^{\mathsf T}$, so $M$ is Hermitian. The group contributes $\operatorname{tr}_j(a^{\mathsf T}M\bar a)$ to the exponent, which is the analogue of a single complex field with a Hermitian action. By the lemma below, the average over $a$ is $(-1/Q)^{\varrho_O}$. The matrix $\Upsilon(t_O)$ is alternating, so $\varrho_O$ is even and the sign cancels. The average is therefore $Q^{-\varrho_O}=2^{-j\varrho_O}$. The group has $2j$ momenta, so this is again Eq.~\eqref{eq:orbit-factor}.

\begin{lemma}
\label{lem:hermitian-sum}
Let $M$ be an $N\times N$ Hermitian matrix over $\F_{Q^2}$, that is, $\bar M=M^{\mathsf T}$, and let $\varrho$ be its rank. Then the average of $(-1)^{\operatorname{tr}_j(a^{\mathsf T}M\bar a)}$ over all $a\in\F_{Q^2}^N$ is $(-1/Q)^{\varrho}$.
\end{lemma}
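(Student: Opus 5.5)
We will reduce the sum to a diagonal form and then evaluate a one-dimensional sum. The form $h(a)=a^{\mathsf T}M\bar a$ is a Hermitian form over $\F_{Q^2}/\F_Q$. Here conjugation is $z\mapsto\bar z=z^Q$, and $\operatorname{tr}_j$ denotes the trace from $\F_Q$ to $\F_2$. Its values lie in $\F_Q$, because
\[
\overline{h(a)}=\bar a^{\mathsf T}\bar M a=\bar a^{\mathsf T}M^{\mathsf T}a=h(a).
\]
So $\operatorname{tr}_j(h(a))$ makes sense.

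\textbf{Step 1 (diagonalization).} We first show that any Hermitian matrix is congruent, $M\mapsto P^{\mathsf T}M\bar P$ with $P$ invertible over $\F_{Q^2}$, to $\mathrm{diag}(d_1,\ldots,d_\varrho,0,\ldots,0)$ with $d_i\in\F_Q^\times$. The change of variables $a=Pb$ is a bijection of $\F_{Q^2}^N$, so it leaves the average unchanged.

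The diagonalization is a Gram--Schmidt argument. If some $h(e)\neq0$, split off $e$ and its $h$-orthogonal complement, then recurse. The only obstacle is characteristic two. We must show that a nonzero Hermitian form has an anisotropic vector, even when every diagonal entry $M_{uu}$ vanishes. Suppose $M_{uv}\neq0$ while $M_{uu}=M_{vv}=0$. Then for $\lambda\in\F_{Q^2}$,
\[
h(e_u+\lambda e_v)=\lambda M_{uv}^{\,\cdot}+\overline{\lambda M_{uv}}=\operatorname{Tr}_{Q^2/Q}(\lambda M_{uv}).
\]
The relative trace is surjective onto $\F_Q$, so we can choose $\lambda$ making this nonzero. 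The diagonal entries $d_i$ lie in $\F_Q$ automatically by the Hermitian condition. I expect this anisotropic-vector step to be the main point needing care, since in characteristic two alternating-type behaviour would otherwise obstruct it. For Hermitian (as opposed to symmetric) forms over a quadratic extension, the relative trace rescues it.

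\textbf{Step 2 (factorization).} After diagonalization,
\[
\operatorname{tr}_j(h(b))=\sum_i\operatorname{tr}_j(d_i\,b_i\bar b_i),
\]
and the average factorizes over coordinates. Each zero coordinate contributes $1$.

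\textbf{Step 3 (one-dimensional sum).} For a nonzero coordinate we need the average of $(-1)^{\operatorname{tr}_j(d N(z))}$ over $z\in\F_{Q^2}$, where $N(z)=z\bar z=z^{Q+1}$. The norm map sends $0\mapsto0$, and it maps $\F_{Q^2}^\times$ onto $\F_Q^\times$ with every fibre of size $Q+1$. Since $d\neq0$, multiplication by $d$ permutes $\F_Q^\times$. The sum is therefore
\[
1+(Q+1)\sum_{y\in\F_Q^\times}(-1)^{\operatorname{tr}_j(y)}.
\]
By Eq.~\eqref{eq:trace-sum} applied to $\F_Q$ with $c=1$, the full sum over $\F_Q$ of $(-1)^{\operatorname{tr}_j(y)}$ vanishes. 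Removing the $y=0$ term, the sum over $\F_Q^\times$ equals $-1$. The total is then $1-(Q+1)=-Q$. Dividing by $Q^2$ gives an average of $-1/Q$.

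Multiplying the $\varrho$ nonzero factors yields $(-1/Q)^{\varrho}$. The rank is preserved by the congruence, so $\varrho=\rk M$, which proves the lemma.
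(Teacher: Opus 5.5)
Your proof is correct and follows the paper's route: Hermitian Gram--Schmidt diagonalization, factorization over coordinates, and the norm-fibre count giving $-1/Q$ for each nonzero direction. Your relative-trace argument for an anisotropic vector is the same fact as the paper's footnote that $z+z^Q$ cannot vanish on all of $\F_{Q^2}$; the expansion should read $h(e_u+\lambda e_v)=\bar\lambda M_{uv}+\lambda\overline{M_{uv}}$, but since $\lambda$ ranges over the whole field this slip does not affect the conclusion.
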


\begin{proof}[Sketch of proof]
As for an ordinary Hermitian matrix, a change of basis makes the form diagonal,\footnote{The Gram--Schmidt procedure needs a vector $a$ with $a^{\mathsf T}M\bar a\neq0$. A nonzero symmetric form over $\F_2$ need not have one, but a nonzero Hermitian form always does. Otherwise, expanding $(a+\lambda b)^{\mathsf T}M(\bar a+\bar\lambda\bar b)=0$ would give $z+z^Q=0$ with $z=\bar\lambda\,a^{\mathsf T}M\bar b$, for every $\lambda\in\F_{Q^2}$. This is impossible when $a^{\mathsf T}M\bar b\neq0$, because a polynomial of degree $Q$ cannot vanish at all $Q^2$ points.}
\begin{equation}
a^{\mathsf T}M\bar a=\sum_{i=1}^{\varrho}c_i\,a_i\bar a_i,\qquad 0\neq c_i\in\F_Q,
\end{equation}
and the average becomes a product of $\varrho$ averages over one variable each. For one variable $\lambda\in\F_{Q^2}$, we count the points on ``circles.'' The product $\lambda\bar\lambda=\lambda^{Q+1}$ is the analogue of $|\lambda|^2$. It is zero only for $\lambda=0$, and it takes each of the $Q-1$ nonzero values $w\in\F_Q$ exactly $Q+1$ times. By Eq.~\eqref{eq:trace-sum} for $\F_Q$, the nonzero values give $\sum_{w\neq0}(-1)^{\operatorname{tr}_j(cw)}=0-1=-1$. The average over $\lambda$ is therefore
\begin{equation}
\frac{1+(Q+1)(-1)}{Q^2}=-\frac1Q.
\end{equation}
\end{proof}

Each nonzero direction of a Hermitian form thus gives a factor $-1/Q$. This is the only possible source of a minus sign in the Gauss sum, and it cancels because the rank of $\Upsilon(t)$ is even. We conclude that $Z_n^{(q)}=2^{-r_G(n)/2}$ for every odd $n$: the Gauss sum is never zero and never negative. In the language of the main text, $\Phi_{\mathcal G}$ vanishes on the kernel of $\Gamma$, and the Arf invariant is zero.

\subsection{Local complementation}

Local complementation at a vertex $v$ toggles every edge between two neighbors of $v$. For the weighted matrix $\Upsilon(t)$, the same toggling is produced by a few row and column operations, and these cannot change the rank. This works because the weights are momentum differences, like potential differences in a circuit. The weight from $u$ to $w$ through $v$ is the same as the weight of a direct edge from $u$ to $w$:
\begin{equation}
(t_u-t_v)+(t_v-t_w)=t_u-t_w.
\end{equation}

\begin{proof}[Proof of Proposition~\ref{prop:lc-invariance}]
Fix a momentum assignment and let $A=\Upsilon_G(t)$. Add row $v$ of $A$ to the row of each neighbor of $v$. Then add column $v$ to the column of each neighbor of $v$. For two neighbors $u$ and $w$ of $v$, the row operation adds $A_{vw}=t_v-t_w$ to the entry $A_{uw}$, and the column operation adds $A_{uv}=t_u-t_v$. Together they add $t_u-t_w$, which is the weight of a direct edge from $u$ to $w$. If $\{u,w\}$ was not an edge, it now is. If it was an edge, the two equal weights cancel modulo two, and the edge is removed. No other entry changes. The result is $\Upsilon_{LC_v(G)}(t)$, so the two matrices have the same rank. This is true for every momentum assignment over every finite field of characteristic two. Local complementation therefore preserves collision control and the ranks $\varrho_G(\pi)$.
\end{proof}

\section{Derivation of Table~\ref{tab:results}}
\label{app:collision-examples}

In this appendix, each vertex is its own party. Thus $q=N$, and each vertex $u$ has its own momentum $t_u$.

\paragraph*{The idea.}
Think of the $n$ possible momenta as $n$ colors. A momentum assignment is then a coloring of the vertices of $G$, with no constraint between neighbors. We call an edge \emph{active} if its two endpoints have different colors. Only the active edges appear in $\Upsilon(t)$, because the weight $t_u-t_v$ is zero on the other edges. For each graph in Table~\ref{tab:results}, the rank of $\Upsilon(t)$ depends only on which edges are active. The rank sum is therefore a counting problem. We sort the colorings by their set of active edges, count each class, and add up the ranks. The multi-entropy is then the slope of the result at $n=1$.

The slope at $n=1$ has a simple meaning. At $n=1$ there is only one color, so no edge can be active. Near $n=1$, a class of colorings contributes only through the part of its size that is linear in $n-1$. In a tree, each active edge needs its own choice of a new color, which gives a factor $n-1$. So only the colorings with a single active edge contribute, and each edge gives one unit of $\log2$. In a cycle, the colors must also match up after one full turn. This constraint gives a term linear in $n-1$ for every set of active edges, with alternating signs. The net effect is that an odd cycle with $2r+1$ edges has $S_1=(r+1)\log2$, about half of the value that one unit per edge would give. Complete bipartite graphs have even cycles, but their $\Upsilon(t)$ is simple enough that we can find its rank directly.

\subsection{Summing over all colorings}

The main text fixes the momentum of the reference party to $t_q=1$. This is like fixing a gauge: if we multiply all momenta by the same root of unity, $\Upsilon(t)$ is multiplied by a constant, and its rank does not change. It is simpler to sum over all $n^N$ colorings. This counts each term of $r_G(n)$ exactly $n$ times:
\begin{equation}
R_G(n)\equiv\sum_{\text{all }t}\rk\Upsilon(t)=n\,r_G(n).
\label{eq:all-momenta}
\end{equation}
The extra factor of $n$ does not affect the limit $n\to1$, because the rank sum vanishes at $n=1$, and Eq.~\eqref{eq:rank-partition-function} gives
\begin{equation}
S_1(G)=\frac{\log2}{2}\,R_G'(1).
\label{eq:S1-from-R}
\end{equation}

\subsection{Graphs without even cycles}

For a graph without even cycles, Propositions~\ref{prop:matching} and~\ref{lem:no-even-cycle} give $\rk\Upsilon(t)=2\nu(F)$. Here $F$ is the set of active edges, and $\nu(F)$ is the size of the largest matching made of edges in $F$. Let $N_F(n)$ be the number of colorings whose active edges are exactly the edges in $F$. Then $R_G(n)=2\sum_FN_F(n)\,\nu(F)$, and Eq.~\eqref{eq:S1-from-R} gives
\begin{equation}
S_1(G)=\log2\sum_{F\subseteq E}N_F'(1)\,\nu(F).
\label{eq:S1-by-edges}
\end{equation}
Each set of edges $F$ contributes its matching number, weighted by the slope at $n=1$ of the number of colorings in which exactly these edges are active.

\subsection{Forests}

Let $G$ be a forest with $m$ edges. To count the colorings with active set $F$, give the root of each tree any of the $n$ colors, and move outward. Across an inactive edge the color is copied, which gives one choice. Across an active edge the color must change, which gives $n-1$ choices. Hence
\begin{equation}
N_F(n)=n^{\#\text{trees}}\,(n-1)^{|F|}.
\end{equation}
At $n=1$, this has slope one if $F$ is a single edge, and slope zero if $F$ has two or more edges. A single edge has $\nu=1$, so each edge contributes one unit of $\log2$:
\begin{equation}
S_1(G)=m\log2.
\end{equation}

\subsection{Odd cycles}

Let $G=C_{2r+1}$. To count the colorings with active set $F$, merge the two endpoints of each inactive edge, since they have the same color. This leaves a shorter cycle with $|F|$ vertices, in which neighbors have different colors. We count these colorings with a transfer matrix. One step around the cycle must change the color, so it is described by the $n\times n$ matrix $J-I$, where $J$ is the matrix of all ones. Closing the cycle gives a trace. The eigenvalues of $J-I$ are $n-1$ (once) and $-1$ ($n-1$ times), so
\begin{equation}
N_F(n)=\Tr(J-I)^{|F|}=(n-1)^{|F|}+(-1)^{|F|}(n-1).
\end{equation}
This is the chromatic polynomial of a cycle. The first term is the same as for a tree. The second term comes from closing the cycle, and it is linear in $n-1$ for every $F$. It also makes $N_F=0$ for a single active edge: the color cannot change exactly once around the cycle and still return to its initial value. So no single edge contributes, but every set of two or more active edges does, with slope $N_F'(1)=(-1)^{|F|}$. Eq.~\eqref{eq:S1-by-edges} thus becomes an alternating sum,
\begin{equation}
\frac{S_1(C_{2r+1})}{\log2}=\sum_{|F|\geq2}(-1)^{|F|}\,\nu(F).
\label{eq:cycle-alternating}
\end{equation}
These alternating signs are what allow the cycle to have a smaller continued value than the path that remains when one of its vertices is deleted (see the main text).

To evaluate the sum, we first include the empty set and the single edges, and we remove them again at the end. The active edges of $F$ form \emph{runs}, which are maximal blocks of consecutive active edges. A run of $\ell$ edges is a path. Its maximum matching takes every other edge, so it has $\lceil\ell/2\rceil$ edges. Hence $2\nu(F)=|F|+o(F)$, where $o(F)$ is the number of runs of odd length. The only exception is the whole cycle $E$, for which $2\nu(E)=2r=|E|-1$ (we set $o(E)=0$). Thus
\begin{equation}
2\sum_{F}(-1)^{|F|}\nu(F)=\sum_F(-1)^{|F|}|F|+\sum_F(-1)^{|F|}o(F)+1,
\end{equation}
where the final $+1$ is the correction for $F=E$.

Most terms of these alternating sums cancel in pairs. If an edge can be switched on or off without a change in the summand, the two choices have opposite signs and cancel. Only the configurations without such a free edge remain.
\begin{itemize}
\item The sum of $(-1)^{|F|}|F|$ counts, for each edge $e$, the sets $F$ that contain $e$. Every other edge is free, so this sum is zero.
\item The sum of $(-1)^{|F|}o(F)$ counts, for each run $R$ of odd length, the sets $F$ in which $R$ is a run. For this, $R$ must be active and the two edges next to $R$ must be inactive, and all other edges are free. So a run contributes only if there are no other edges, that is, if it has $2r-1$ edges and $F=R$. There are $2r+1$ such runs, each with sign $-1$, so this sum is $-(2r+1)$.
\end{itemize}
The sum over all $F$ is therefore $\frac12[0-(2r+1)+1]=-r$. The empty set contributes nothing, and each of the $2r+1$ single edges contributes $-1$. When we remove them, we get
\begin{equation}
S_1(C_{2r+1})=(-r+2r+1)\log2=(r+1)\log2.
\end{equation}
As a check, the triangle ($r=1$) gives $2\log2$. This is the value for the three-party GHZ state, as it must be. The triangle is obtained by local complementation at the middle vertex of the three-vertex path, which is the star graph of Fig.~\ref{fig:ghz-star-graph} for $q=3$.

\end{document}